\documentclass[conference]{IEEEtran}
\IEEEoverridecommandlockouts
\usepackage{amsmath,amssymb,amsfonts,amsthm}
\usepackage{algorithmic}
\usepackage{graphicx}
\usepackage{textcomp}
\usepackage{xcolor}
\usepackage{multicol}
\usepackage{subcaption}
\def\BibTeX{{\rm B\kern-.05em{\sc i\kern-.025em b}\kern-.08em
    T\kern-.1667em\lower.7ex\hbox{E}\kern-.125emX}}

\usepackage[style=numeric-comp, sorting=none, doi=false, url=false, isbn=false, giveninits=true, natbib=true]{biblatex}
\usepackage{siunitx}
\usepackage{dsfont}

\DeclareMathOperator*{\argmin}{arg\,min}

\newcommand{\trace}[1]{\textup{tr}\!\left(#1\right)}
\newcommand{\real}[1]{\textup{Re}\!\left(#1\right)}
\newcommand{\imag}[1]{\textup{Im}\!\left(#1\right)}

\newtheorem{assumption}{Assumption}
\newtheorem{lemma}{Lemma}
\newtheorem{corollary}{Corollary}
\newtheorem{theorem}{Theorem}
\newtheorem{remark}{Remark}
\newtheorem{definition}[]{Definition}
\addbibresource{refs_final.bib}
\renewbibmacro{in:}{}
\AtEveryBibitem{%
  \clearfield{note}%
}

\begin{document}

\title{Performance Bounds for Quantum Feedback Control\\
\thanks{This work
was supported in part by the National Science Foundation under Grant
OAC-1835443, Grant SII-2029670, Grant ECCS-2029670, Grant OAC-
2103804, and Grant PHY-2021825, in part by the U.S. Agency for International
Development through Penn State under Grant S002283-USAID,
in part by the the MIT International Science and Technology Initiatives
through the MIT-Switzerland Lockheed Martin Seed Fund, in part by the
Advanced Research Projects Agency-Energy (ARPA-E), U.S. Department
of Energy under Grant DE-AR0001211 and Grant DE-AR0001222,
in part by the Research Council of Norway and Equinor ASA through
Research Council project under Grant 308817, in part by the United
States Air Force Research Laboratory, and in part by the the United
States Air Force Artificial Intelligence Accelerator and was accomplished
under Cooperative Agreement Number under Grant FA8750-19-2-1000.
The work of Julian Arnold was supported by the Swiss National Science under individual grant 200020\_200481.}
}

\author{
\IEEEauthorblockN{Flemming Holtorf\textsuperscript{\IEEEauthorrefmark{1}\IEEEauthorrefmark{2}}, Frank Sch{\"a}fer\textsuperscript{\IEEEauthorrefmark{1}\IEEEauthorrefmark{3}}, Julian Arnold\textsuperscript{\IEEEauthorrefmark{4}}, Christopher Rackauckas\textsuperscript{\IEEEauthorrefmark{1}}, Alan Edelman\textsuperscript{\IEEEauthorrefmark{1}}}
\IEEEauthorblockA{\textsuperscript{\IEEEauthorrefmark{1}} Massachusetts Institute of Technology, Cambridge, MA, USA}
\IEEEauthorblockA{\textsuperscript{\IEEEauthorrefmark{4}} University of Basel, Basel, Switzerland}
\IEEEauthorblockA{\textsuperscript{\IEEEauthorrefmark{2}} holtorf@mit.edu}
\IEEEauthorblockA{\textsuperscript{\IEEEauthorrefmark{3}} franksch@mit.edu}
}

\maketitle

\begin{abstract}
  The limits of quantum feedback control have immediate consequences for quantum information science at large, yet remain largely unexplored. Here we combine quantum filtering theory and moment-sum-of-squares techniques to construct a hierarchy of convex optimization problems that furnish monotonically improving, computable bounds on the best attainable performance for a broad class of quantum feedback control problems. These bounds may serve as witnesses of fundamental limitations, optimality certificates, or performance targets. We prove convergence of the bounds to the optimal control performance under technical conditions and demonstrate the practical utility of our approach by designing certifiably near-optimal controllers for a qubit in a cavity subjected to photon counting and homodyne detection measurements. 
\end{abstract}

\begin{IEEEkeywords}
Quantum Information \& Control, Stochastic Optimal Control, Quantum Filtering, Convex Optimization
\end{IEEEkeywords}

\section{Introduction}
Feedback control of devices at the quantum scale holds significant potential for current and future applications in the field of quantum information science~\cite{glaser2015training, koch2022quantum}. The nonlinear and stochastic nature of quantum systems under continuous observation, however, complicates the quest for an effective deployment of feedback control in practice~\cite{wiseman2009quantum}. While the characterization of optimal quantum feedback controllers through the dynamic programming principle has a longstanding history, dating back to the work of \citet{belavkin1988nondemolition} in 1988, designing such controllers by solving the quantum analog of the nonlinear Hamilton-Jacobi-Bellman (HJB) equation remains, barring a few simplified situations \cite{bouten2005bellman,belavkin2009dynamical,belavkin2006nondemolition}, an elusive challenge.
It is instead common practice to rely on heuristics, often rooted in reinforcement learning, gradient-based optimization, or expert intuition, to design quantum feedback controllers in applications~\cite{abdelhafez2019gradient, schaefer2021control, porotti2022deep}. And although such heuristically derived control policies are frequently found to perform remarkably well, their degree of suboptimality essentially always remains unquantified, leaving uncertainty about whether any observed performance limitations are fundamental or simply due to a suboptimal controller design. In this paper, we construct a method to bring clarity to such situations by computing informative bounds on the best attainable control performance for a wide range of quantum feedback control problems. These bounds may serve as certificates of optimality, witnesses of fundamental limitations, or performance targets and, as such, complement controller design heuristics.

The key insight to enable our construction is that viewing (stochastic) optimal control through the lens of linear programming over HJB subsolutions endows the problem with a convex, albeit infinite-dimensional, geometry~\cite{lions1982optimal,fleming1989convex,bhatt1996occupation}. From there, the moment-sum-of-squares hierarchy~\cite{parrilo2000structured,lasserre2001global} gives rise to a practical computational scheme; under mild assumptions, it generates a sequence of increasingly tight, tractable convex approximations to the original infinite-dimensional linear program which in turn furnishes a sequence of monotonically improving, practically computable bounds for the best attainable control performance. To apply this approach, which has already been considered for a range of classical control problems~\cite{lasserre2008nonlinear,savorgnan2009discrete,holtorf2024stochastic}, to the quantum realm, we leverage quantum filtering theory~\cite{belavkin1988nondemolition,belavkin1992quantum} and cast quantum feedback control problems as stochastic optimal control problems. The result is a method that enables the computation of informative bounds on the best attainable feedback control performance for a rich class of quantum systems under continuous observation via conic optimization.

The remainder of this article is organized as follows. First, Section~\ref{sec:notation} introduces our notation. Next, in Section~\ref{sec:control_problem}, we define the class of quantum feedback control problems under consideration and discuss key assumptions. Our main contribution, the construction and convergence analysis of a hierarchy of increasingly tight convex bounding problems for the best attainable quantum feedback control performance, is presented in Section~\ref{sec:bounding_problem}. Section~\ref{sec:extensions} briefly discusses extensions to this hierarchy. Finally, we demonstrate the practical utility of our bounding method with a qubit control example in Section~\ref{sec:example}, before we conclude in Section~\ref{sec:conclusion}.

\section{Notation}\label{sec:notation}
  Throughout this article, we rely on the following notational conventions.

  \textbf{Linear algebra \& analysis} -- The adjoint of a matrix (or vector) $A$ will be denoted by $A^*$. The commutator and anticommutator of two square matrices $A$ and $B$ will be denoted by $[A,B] = AB - BA$ and $\lbrace A, B\rbrace = AB + BA$, respectively. The notation $\langle \cdot, \cdot \rangle$ should not be confused with the Dirac notation commonly used in quantum physics but instead should be understood more broadly as the inner product between two (dual) vector spaces (most frequently spaces of Hermitian matrices). We use $\mathcal{C}^{1,2}(X \times Y)$ to denote functions with two arguments that are once, respectively twice, continuously differentiable with respect to their first and second argument on the domain $X \times Y$; when $X\times Y$ is closed, differentiability shall be understood in the sense of~\citet{whitney1992analytic}.
  
  \textbf{Probability} -- 
  For the sake of a light notation, we denote the classical expectation of a random variable $x$ by $\mathbb{E}[x]$ and omit explicit reflection of the underlying probability measure as that will be clear from context throughout. We use $\delta_x$ to refer to the Dirac measure at the singleton $\lbrace x\rbrace$. 

  \textbf{Algebraic geometry} -- The set of polynomials with real coefficients in the variables $x$ will be denoted by $\mathbb{R}[x]$; similarly, we refer to the restriction of $\mathbb{R}[x]$ to polynomials with degree at most $d$ with $\mathbb{R}_d[x]$. The set of sum-of-squares polynomials will be denoted by $\Sigma^2[x]$. Whenever we refer to polynomials in $\mathbb{R}[\rho]$ where $\rho \in \mathbb{C}^{n\times n}$, we mean a polynomial with real coefficients jointly in the elements of $\real{\rho}$ and $\imag{\rho}$. Lastly, we refer to vector- and matrix-valued functions as polynomials when all of their components are polynomials. 
  
\section{Quantum stochastic optimal control}\label{sec:control_problem}
We consider quantum systems with a Hermitian Hamiltonian of the form
\begin{align*}
  H(u) = H_0 + \sum_{k=1}^{K} u_k H_k,
\end{align*}
where $H_0$ denotes the drift Hamiltonian of the system, and $H_1, \dots, H_K$ are control fields with tunable drives $u = \begin{bmatrix} u_1 & \cdots &u_K \end{bmatrix}$. The control drives are further assumed to be confined to an admissible set $U \subset \mathbb{R}^K$. Finally, we assume that the systems under consideration are subjected to a continuous measurement process to enable feedback control. The dynamics of such systems, which we state below without further derivation, are established by quantum filtering theory. For an introductory treatment of the subject, we refer the reader to \cite{wiseman2009quantum,jacobs2006straightforward}.

The density matrix $\rho_t$ encoding the state of a quantum system conditioned on a continuous measurement process $\xi_t$ follows stochastic dynamics described by the Quantum Filtering Equation~\cite{belavkin1992quantum}
\begin{align}
   \mathrm{d} \rho_t = \mathcal{L}(u_t) \rho_t \, \mathrm{d}t + \mathcal{G} \rho_t \, \mathrm{d} \xi_t. \tag{QFE} \label{eq:QFE} 
\end{align} 
For systems of the described structure, the action of the Lindbladian $\mathcal{L}(u)$ is given by
\begin{align*}
  \mathcal{L}(u) \rho = -i [H(u), \rho] + \sum_{l=1}^L \left( \sigma_l \rho \sigma_l^* - \frac{1}{2}\lbrace \sigma_l^* \sigma_l, \rho \rbrace \right),
\end{align*}
where the jump operators $\sigma_l$ characterize the interaction between the quantum system and its environment due to observation. We focus on systems that are subjected to a combination of homodyne detection and photon counting measurements. For notational convenience, we partition the index set of measurements $\lbrace 1, \dots, L \rbrace$ into sets HD and PC, covering the homodyne detection and photon counting measurements, respectively. The innovation operator $\mathcal{G}$ then decomposes into two separate contributions according to the type of measurement:
\begin{align*}
  \mathcal{G} \rho_t  \, \mathrm{d} \xi_t = \sum_{l \in {\rm HD}} \mathcal{G}_l \rho_t \, \mathrm{d} w_t^l + \sum_{l \in {\rm PC}} \mathcal{G}_l \rho_t \, \mathrm{d} n_t^l -  \mathcal{L}_l\rho_t \, \mathrm{d}t.
\end{align*}
Homodyne detection causes diffusive innovations described by standard Gaussian increments $\mathrm{d} w_t^l$ which are in one-to-one correspondence with a measured homodyne current~\cite{wiseman2009quantum}. The associated innovation operator acts according to
\begin{align*}
  \mathcal{G}_l \rho_t \, \mathrm{d} w_t^l =  (\sigma_l \rho_t + \rho_t \sigma_l^* - \trace{\sigma_l \rho_t + \rho_t \sigma_l^*} \! \rho_t) \, \mathrm{d} w_t^l.
\end{align*}
Photon counting, in contrast, causes a deterministic drift
\begin{align*}
  \mathcal{L}_l \rho_t \, \mathrm{d}t = (\sigma_l \rho_t \sigma_l^* - \trace{\sigma_l \rho_t \sigma_l^*}\rho_t) \, \mathrm{d}t.
\end{align*} 
and leads to discrete innovations upon detection of photon emissions~\cite{wiseman2009quantum}. The emission of photons is described by Poisson counters $n_t^l$, which fire at rate $\lambda_l(\rho) = \trace{\sigma_l \rho \sigma_l^*}$, and conditioned on the measurement of a photon emission, the state of the system jumps to $h_l(\rho) = \sigma_l \rho \sigma_l^*/\text{tr}\left(\sigma_l \rho \sigma^*_l\right)$. The associated innovation operator accordingly acts as
\begin{align*}
  \mathcal{G}_l \rho_t \, \mathrm{d} n_t^l = \left( h_l(\rho_t) -\rho_t \right) \mathrm{d} n_t^l.
\end{align*}


As it will be relevant throughout, it is worth noting here that the dynamics described by \eqref{eq:QFE} inherently preserve purity of the (conditioned) quantum state.
\begin{lemma}\label{lem:purity_invariance}
  The set of pure quantum states
  \begin{align*}
      B = \left\{ \rho \in \mathbb{C}^{n\times n}: \rho^* = \rho, \trace{\rho} = \trace{\rho^2} =1 \right\}
  \end{align*} is invariant under the dynamics \eqref{eq:QFE}.
\end{lemma}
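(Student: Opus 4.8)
The plan is to show that each of the three constraints defining $B$ --- Hermiticity, unit trace $\trace{\rho}=1$, and purity $\trace{\rho^2}=1$ --- is preserved by the flow generated by \eqref{eq:QFE}. Since \eqref{eq:QFE} is a jump-diffusion, I would treat the continuous (drift plus homodyne) part with Itô's formula and the photon-counting part through the explicit jump map $h_l$, and verify that the drift coefficient and every diffusion coefficient of each constraint function vanish on $B$ while $h_l$ maps $B$ into itself. Standard stochastic-invariance arguments then upgrade these pointwise identities to almost-sure invariance of $B$.

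\textbf{Hermiticity and trace.} Each building block of the right-hand side sends Hermitian matrices to Hermitian matrices (the Hamiltonian commutator, the Lindblad dissipators, the homodyne innovation, and the jump map $h_l$), so Hermiticity is immediate. For the trace I would expand $d\,\trace{\rho_t}$ term by term: $\trace{-i[H,\rho]}=0$ and $\trace{\sigma_l\rho\sigma_l^*-\tfrac12\{\sigma_l^*\sigma_l,\rho\}}=0$ by cyclicity, while the homodyne term, the photon-counting drift $-\mathcal{L}_l\rho$, and the jump increment $h_l(\rho)-\rho$ are each trace-free precisely because $\trace{\rho}=1$ (e.g. $\trace{h_l(\rho)}=1=\trace{\rho}$). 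Hence the unit-trace constraint is tangential to its own level set.

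\textbf{Purity (the crux).} Applying Itô's formula to $f(\rho)=\trace{\rho^2}$ yields $df = 2\,\trace{\rho\,d\rho} + \trace{(d\rho)^2}$, where the quadratic term collects the Itô corrections $\sum_{l\in HD}\trace{(\mathcal{G}_l\rho)^2}\,dt$ arising from $dw_t^l\,dw_t^{l'}=\delta_{ll'}\,dt$. I would first check that every homodyne diffusion coefficient $2\,\trace{\rho\,\mathcal{G}_l\rho}$ vanishes on $B$ --- a short computation using $\rho^2=\rho$ and $\trace{\rho}=1$ in which the centering term $\trace{\sigma_l\rho+\rho\sigma_l^*}\rho$ cancels the rest --- so purity does not fluctuate to first order in $dw$. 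The decisive step is the drift coefficient: after using $\rho^2=\rho$ to simplify $2\,\trace{\rho\,A}$ (with $A$ the total continuous drift, in which the photon-counting contributions $\sigma_l\rho\sigma_l^*$ cancel against $-\mathcal{L}_l\rho$, leaving only $\trace{\sigma_l\rho\sigma_l^*}\rho$), the surviving homodyne dissipator terms must exactly cancel the Itô correction $\sum_{l\in HD}\trace{(\mathcal{G}_l\rho)^2}$. I expect this cancellation to be the main obstacle: it is the quantum analog of measurement back-action preserving purity, and it is most transparent after factoring $\rho=|\psi\rangle\langle\psi|$, whereupon both groups reduce to $\sum_{l\in HD}\big(\|\sigma_l\psi\|^2-|\langle\psi,\sigma_l\psi\rangle|^2\big)$ with opposite signs.

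\textbf{Jumps and conclusion.} For the photon-counting jumps it suffices to note that if $\rho=|\psi\rangle\langle\psi|\in B$ and the counter fires (so $\lambda_l(\rho)=\trace{\sigma_l\rho\sigma_l^*}>0$), then $\sigma_l\rho\sigma_l^*=(\sigma_l\psi)(\sigma_l\psi)^*$ is a nonzero rank-one positive semidefinite matrix, whence its normalization $h_l(\rho)$ is again a rank-one orthogonal projector, i.e. $h_l(\rho)\in B$. Combining the three ingredients --- preservation of Hermiticity, vanishing of the drift and diffusion coefficients of both constraint functions on $B$ under the continuous generator, and $h_l(B)\subseteq B$ --- a standard invariance argument (via Itô's formula, or equivalently by recognizing that $\rho_t=|\psi_t\rangle\langle\psi_t|$ with $\psi_t$ solving the associated stochastic Schr\"odinger equation that preserves $\|\psi_t\|=1$) shows that $\rho_0\in B$ implies $\rho_t\in B$ for all $t\ge 0$ almost surely, establishing the claim.
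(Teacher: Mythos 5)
Your proposal is correct and follows essentially the same route as the paper's (very terse) proof: apply It\^o's lemma to $\trace{\rho_t}$ and $\trace{\rho_t^2}$ to show both are conserved on $B$, and note that the right-hand side of \eqref{eq:QFE} preserves Hermiticity. Your write-up simply makes explicit what the paper compresses into two sentences --- the vanishing of the diffusion coefficients, the cancellation of the homodyne dissipator drift against the It\^o correction (both equal to $\sum_{l\in HD}2\bigl(\|\sigma_l\psi\|^2-|\langle\psi,\sigma_l\psi\rangle|^2\bigr)$ up to sign), and the fact that the jump map $h_l$ sends rank-one projectors to rank-one projectors.
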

\begin{proof}
  Applying It{\^o}'s lemma to \eqref{eq:QFE} shows that \begin{align*}
  \mathrm{d} \! \left[\trace{\rho_t}\right] = \mathrm{d} \! \left[\trace{\rho_t^2}\right] = 0
  \end{align*}
  if $\rho_0 \in B$. Moreover, the right-hand side of \eqref{eq:QFE} maps Hermitian matrices into Hermitian matrices. 
\end{proof}

Given the described abstraction of feedback-controlled quantum systems under continuous observation, the goal of this paper is to bound the best attainable feedback control performance as defined by the Quantum Stochastic Optimal Control Problem 
\begin{align*}
  J^* = \inf_{u_t} \quad & \mathbb{E}\left[ \int_0^T \ell(\rho_t, u_t) \, \mathrm{d}t + m(\rho_T) \right] \tag{QSOCP} \label{QSOCP}\\
  \text{s.t.} \quad & \rho_t \text{ satisfies \eqref{eq:QFE} on } [0,T] \text{ with } \rho_0 \sim \nu_0, \\
              \quad & u_t \in U \text{ is non-anticipative on } [0,T],
\end{align*} 
where $\ell$ and $m$ encode the control performance in terms of accumulating stage cost and terminal cost, respectively. We wish to highlight that \eqref{QSOCP} includes many common quantum feedback control tasks as special cases. Several examples are listed in Table \ref{tab:examples}.

\begin{table}[]
  \begin{center}
  \renewcommand{\arraystretch}{1.3}
  \caption{Common quantum control tasks as special cases of \eqref{QSOCP}.}
  \label{tab:examples}
  \begin{tabular}{|l|c|c|}
       \hline
       Control task         & Stage cost $\ell(\rho_t, u_t)$ & Terminal cost $m(\rho_T)$                                               \\\hline
       State preparation   & $0$                 & $1-\psi_{\rm ref}^* \rho_T \psi^{\phantom{*}}_{\rm ref}$ \\
       State stabilization & $1-\psi_{\rm ref}^* \rho_t \psi^{\phantom{*}}_{\rm ref}$ & $0$ \\
       State purification  & $\trace{\rho_t^2}$  & $0$  \\
       Entanglement creation\textsuperscript{1} & $0$  & $1-\trace{\tilde{\rho}_T^2}$ \\
       \hline
       \hline
  \end{tabular}
  \end{center}
  \quad \footnotesize{\textsuperscript{1} $\tilde{\rho}_T$ denotes the reduced density matrix for a given subsystem~\cite{bhaskara2017generalized}}
\end{table}






In order to apply moment-sum-of-squares techniques to construct tractable lower bounding problems for \eqref{QSOCP}, we finally make the following assumptions.
\begin{assumption}\label{asmpt:initial_purity}
  The initial distribution $\nu_0$ of the quantum state satisfies $\text{supp} \, \nu_0 \subset B$, i.e., the initial state is guaranteed to be pure albeit potentially uncertain.
\end{assumption}

\begin{assumption}\label{asmpt:admissible_controls}
  The set of admissible control actions $U$ is a compact and basic closed semialgebraic set, i.e., there exist polynomials $\mathcal{U} = \lbrace q_1, \dots, q_r \rbrace$ such that $U = \{u \in \mathbb{R}^K : q(u) \geq 0, \forall q \in \mathcal{U}\}$ is compact. We refer to $\mathcal{U}$ as the control constraints. 
\end{assumption}

\begin{assumption}\label{asmpt:cost}
  The cost functions $\ell$ and $m$ are polynomials.
\end{assumption}

\begin{assumption}\label{asmpt:jump_operators}
  The jump operators $\sigma_l$ for $l \in {\rm PC}$ are such that $h_l(\rho)$ is a polynomial of degree at most one.    
\end{assumption}

Assumption \ref{asmpt:initial_purity} ensures that the quantum state, conditioned on the measurements, remains pure as per Lemma \ref{lem:purity_invariance} and thus confined to a basic closed semialgebraic set. Assumptions \ref{asmpt:admissible_controls} -- \ref{asmpt:jump_operators} guarantee further that the dynamics \eqref{eq:QFE} and control problem \eqref{QSOCP} are described entirely by polynomials. These properties will be essential for our construction of tractable bounding problems in Section \ref{sec:bounding_problem}.

We wish to emphasize that, while Assumptions \ref{asmpt:initial_purity} -- \ref{asmpt:cost} are extremely mild and may even be relaxed (see Section \ref{sec:extensions}), Assumption \ref{asmpt:jump_operators} is more limiting but still holds for many practically relevant photon counting measurement setups. Examples of photon counting measurements that satisfy Assumption \ref{asmpt:jump_operators} are measurements associated with unitary jump operators or measurements that cause a jump to the same quantum state independent of the state the photon emission occurred in.

\section{A convex bounding approach}\label{sec:bounding_problem}
To derive computable lower bounds on the optimal value of \eqref{QSOCP}, we draw inspiration from the dynamic programming heuristic. The dynamic programming heuristic asserts that the value function associated with \eqref{QSOCP}, i.e., the minimal cost-to-go
\begin{align}
  V(t, \rho) = \inf_{u_s} \quad & \mathbb{E}\left[ \int_t^T \ell(\rho_s, u_s) \, \mathrm{d}s + m(\rho_T) \right] \label{eq:val_fxn}\\
  \text{s.t.} \quad & \rho_s \text{ satisfies \eqref{eq:QFE} on } [t,T] \text{ with } \rho_t \sim \delta_{\rho}, \nonumber \\
              \quad & u_s \in U \text{ is non-anticipative on } [t,T], \nonumber
\end{align}
satisfies the Hamilton-Jacobi-Bellman (HJB) equation~\cite{oksendal2007applied}:
\begin{align*}
  \inf_{u \in U} \, &\mathcal{A}V(\cdot,\cdot,u) + \ell(\cdot, u) = 0 \text{ on } [0,T) \times B \\
  \text{s.t. } &V(T,\cdot) = m \text{ on } B. 
\end{align*}
Here, $\mathcal{A}$ refers to the infinitesimal generator~\cite{oksendal2007applied} associated with \eqref{eq:QFE}; the action of $\mathcal{A}$ on a smooth function $w\in\mathcal{C}^{1,2}([0,T]\times B)$ is given by 
\begin{align}
  \mathcal{A}w(t,\rho,u) = &\frac{\partial w}{\partial t}(t,\rho) + \langle \tilde{\mathcal{L}}(u)\rho  , \nabla_{\rho} w(t,\rho) \rangle \label{eq:inf_gen} \\
  &+ \frac{1}{2} \sum_{l\in {\rm HD}} \langle \mathcal{G}_l \rho, \nabla^2_{\rho} w(t,\rho) \, \mathcal{G}_l  \rho \rangle \nonumber\\
  &+ \sum_{l \in {\rm PC}} \lambda_l(\rho) \left(w(t, h_l(\rho)) - w(t,\rho) \right), \nonumber
\end{align}
where $\tilde{\mathcal{L}}(u) = \mathcal{L}(u) - \sum_{l\in PC} \mathcal{L}_l$ is the effective drift operator associated with a control action $u$. Note that, due to Assumption \ref{asmpt:initial_purity}, it suffices to solve the HJB equation on $[0,T] \times B$ as $B$ is invariant under the dynamics \eqref{eq:QFE} as per Lemma \ref{lem:purity_invariance}.

While the HJB equation is a nonlinear partial differential equation which is extremely difficult to solve even for low-dimensional systems, we may cast the search for a smooth HJB subsolution as a convex, albeit infinite-dimensional, optimization problem:
\begin{align*}
  \sup_{w \in \mathcal{C}^{1,2}([0,T]\times B)} \quad & \int_B w(0,\cdot) \, \mathrm{d} \nu_0  \tag{subHJB} \label{subHJB}\\
  \text{s.t.} \qquad \ \ \, \quad & \mathcal{A}w + \ell \geq 0 \text{ on }  [0,T] \times B \times U, \\
  & m - w(T,\cdot) \geq 0 \text{ on } B.
\end{align*}

\begin{lemma}\label{lem:underestimator}
  Any feasible point $w$ of \eqref{subHJB} underestimates the value function \eqref{eq:val_fxn} on $[0,T] \times B$ and so $\int_B w (0, \cdot) \, \mathrm{d} \nu_0$ underestimates the best attainable control performance $J^*$. 
\end{lemma}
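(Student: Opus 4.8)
The plan is to run a standard verification (Dynkin) argument: show that feasibility of $w$ forces it to lie below the cost of \emph{every} admissible control, and hence below the value function. First I would fix an arbitrary point $(t,\rho) \in [0,T]\times B$ and an arbitrary non-anticipative control $u_s$ on $[t,T]$, and follow the resulting trajectory $\rho_s$ of \eqref{eq:QFE} started at $\rho_t = \rho$; by Lemma \ref{lem:purity_invariance} this trajectory remains in the compact set $B$. I would then apply It\^o's lemma for jump-diffusions to the process $s \mapsto w(s,\rho_s)$. The continuous part of $d\rho_s$ contributes the time derivative $\partial_t w$, the drift pairing $\langle \tilde{\mathcal{L}}(u_s)\rho_s, \nabla_\rho w\rangle$, and the second-order It\^o correction $\tfrac12\sum_{l\in HD}\langle \mathcal{G}_l\rho_s, \nabla^2_\rho w\,\mathcal{G}_l\rho_s\rangle$, while each Poisson jump $\rho_s \mapsto h_l(\rho_s)$ firing at rate $\lambda_l(\rho_s)$ contributes the compensator term $\sum_{l\in PC}\lambda_l(\rho_s)\big(w(s,h_l(\rho_s)) - w(s,\rho_s)\big)$. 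By construction these drift contributions assemble exactly into the generator $\mathcal{A}w(s,\rho_s,u_s)$ of \eqref{eq:inf_gen}, leaving the homodyne stochastic integrals $\langle \nabla_\rho w, \mathcal{G}_l\rho_s\rangle\, dw_s^l$ and the compensated counting integrals as the only remaining (a priori local martingale) terms.

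Next I would take expectations to obtain Dynkin's formula, $\mathbb{E}[w(T,\rho_T)] = w(t,\rho) + \mathbb{E}\big[\int_t^T \mathcal{A}w(s,\rho_s,u_s)\, ds\big]$. The one genuine technical point, and where I expect the only real (though routine) work to lie, is verifying that the martingale terms vanish in expectation. This follows from the compactness of $B$ (Lemma \ref{lem:purity_invariance}) together with the smoothness $w\in\mathcal{C}^{1,2}$ and compactness of $U$ (Assumption \ref{asmpt:admissible_controls}): the integrands $\langle \nabla_\rho w, \mathcal{G}_l\rho\rangle$, the rates $\lambda_l(\rho)$, and the jump increments $w(s,h_l(\rho))-w(s,\rho)$ are all bounded on $[0,T]\times B$, so the stochastic integrals are genuine square-integrable martingales with zero mean rather than mere local martingales.

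Finally I would invoke feasibility of $w$. The subsolution inequality $\mathcal{A}w + \ell \geq 0$ gives $\mathbb{E}\big[\int_t^T \mathcal{A}w\, ds\big] \geq -\mathbb{E}\big[\int_t^T \ell(\rho_s,u_s)\, ds\big]$, and the terminal inequality $w(T,\cdot)\leq m$ gives $\mathbb{E}[w(T,\rho_T)] \leq \mathbb{E}[m(\rho_T)]$. Substituting both into Dynkin's formula and rearranging yields $w(t,\rho) \leq \mathbb{E}\big[\int_t^T \ell(\rho_s,u_s)\, ds + m(\rho_T)\big]$. Since $u_s$ was an arbitrary admissible control, taking the infimum over all such controls shows $w(t,\rho)\leq V(t,\rho)$ for every $(t,\rho)\in[0,T]\times B$, which is the first claim. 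For the second claim I would integrate this pointwise bound against $\nu_0$, whose support lies in $B$ by Assumption \ref{asmpt:initial_purity}: conditioning the cost in \eqref{QSOCP} on the realized initial state and bounding each conditional cost below by $V(0,\rho_0)$ gives $\int_B V(0,\cdot)\, d\nu_0 \leq J^*$, whence $\int_B w(0,\cdot)\, d\nu_0 \leq \int_B V(0,\cdot)\, d\nu_0 \leq J^*$.
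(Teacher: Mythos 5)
Your proposal is correct and follows essentially the same route as the paper's own proof: Dynkin's formula applied to $w(s,\rho_s)$, combined with the two feasibility inequalities of \eqref{subHJB}, followed by taking the infimum over admissible controls and integrating against $\nu_0$. The only difference is one of detail—you explicitly derive Dynkin's formula from It\^o's lemma for jump-diffusions and verify that the martingale terms vanish using compactness of $B$ and smoothness of $w$, steps the paper compresses into a single citation of Dynkin's formula.
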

\begin{proof}
 Feasibility of $w$ implies that $w(T,\cdot) \leq V(T,\cdot) = m$ on $B$. Now consider any time $0 \leq t < T$, any state $\rho \in B$, and any feedback controller $\{u_s\}_{s\in[t,T]}$ admissible on $[t,T]$. By feasibility of $w$, it follows that for $\rho_t \sim \delta_{\rho}$,
  \begin{align*}
      &\mathbb{E}\left[\int_{t}^T \ell(\rho_s, u_s) \, \mathrm{d}s + m(\rho_T) \right] \\
      & \geq \mathbb{E}\left[\int_{t}^T - \mathcal{A}w(s, \rho_s, u_s) \, \mathrm{d}s + w(T,\rho_T) \right]= w(t, \rho), 
  \end{align*}
  where we used Dynkin's formula~\cite{oksendal2003stochastic} in the last step. Finally, taking the infimum of the left-hand side over all admissible controllers establishes that $V(t, \rho) \geq w(t,\rho)$. 
\end{proof}
The infinite-dimensional nature of \eqref{subHJB} renders its immediate practical value rather limited. We therefore proceed by constructing tractable finite-dimensional restrictions of \eqref{subHJB} using the moment-sum-of-squares hierarchy. To that end, we restrict \eqref{subHJB} to optimization over polynomials of fixed maximum degree $d$ instead of arbitrary smooth functions and further strengthen the non-negativity constraints to sufficient sum-of-squares constraints. For this construction to be well-posed, we require that the left-hand side of the non-negativity constraints are polynomials and that non-negativity is imposed on closed basic semialgebraic sets. The latter is guaranteed by Assumption \ref{asmpt:admissible_controls} and the fact that $B$ is basic closed semialgebraic. The former follows from Assumptions \ref{asmpt:cost}, \ref{asmpt:jump_operators}, and the following result.
\begin{lemma}\label{lem:poly2poly}
  Under Assumption \ref{asmpt:jump_operators}, the infinitesimal generator $\mathcal{A}$ [cf. Eq.~\eqref{eq:inf_gen}] maps polynomials to polynomials.
\end{lemma}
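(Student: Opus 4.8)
The plan is to verify, term by term, that each of the four summands defining $\mathcal{A}w$ in Eq.~\eqref{eq:inf_gen} is a polynomial in $(t,\rho,u)$ whenever $w \in \mathbb{R}[t,\rho]$, and then to invoke the fact that the set of polynomials is closed under addition. The guiding principle throughout is that polynomials are closed under differentiation, multiplication, and composition with polynomial (in particular affine) maps, so the task reduces to exhibiting each building block as a polynomial and tracking how these operations combine them.

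First I would dispatch the three ``classical'' terms. The time derivative $\partial_t w$ lies in $\mathbb{R}[t,\rho]$ since differentiation maps $\mathbb{R}[t,\rho]$ into itself. For the drift term, I would observe that $\mathcal{L}(u)\rho$ is linear in $\rho$ and affine in $u$, because commutators and the conjugations $\sigma_l \rho \sigma_l^*$ are linear maps of $\rho$ and $H(u)$ is affine in $u$. Each subtracted term $\mathcal{L}_l\rho = \sigma_l\rho\sigma_l^* - \trace{\sigma_l\rho\sigma_l^*}\rho$ is a polynomial of degree two in $\rho$, the only nonlinearity being the product of the scalar $\trace{\sigma_l\rho\sigma_l^*}$ with $\rho$. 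Hence $\tilde{\mathcal{L}}(u)\rho$ is polynomial, and since $\nabla_\rho w$ is the gradient of a polynomial (hence polynomial), so is the inner product $\langle \tilde{\mathcal{L}}(u)\rho, \nabla_\rho w\rangle$. The diffusion term is handled analogously: each $\mathcal{G}_l\rho$ is a degree-two polynomial in $\rho$ and $\nabla^2_\rho w$ is a polynomial, so the quadratic form $\langle \mathcal{G}_l\rho, \nabla^2_\rho w\, \mathcal{G}_l\rho\rangle$ is polynomial.

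The remaining jump term is where Assumption~\ref{asmpt:jump_operators} becomes essential, and I expect it to be the only genuine obstacle. The rate $\lambda_l(\rho) = \trace{\sigma_l\rho\sigma_l^*}$ is linear in $\rho$, and $w(t,\rho)$ is polynomial by hypothesis, so the only factor at risk is $\lambda_l(\rho)\,w(t, h_l(\rho))$. In general $h_l(\rho) = \sigma_l\rho\sigma_l^*/\trace{\sigma_l\rho\sigma_l^*}$ is a rational function of $\rho$, so that $w(t, h_l(\rho))$ is rational; multiplying by the linear factor $\lambda_l(\rho)$ does not clear its denominator once $w$ has degree at least two in $\rho$. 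Assumption~\ref{asmpt:jump_operators} circumvents precisely this difficulty: it guarantees that $h_l$ is affine in $\rho$, whence $w(t, h_l(\rho))$ is the composition of a polynomial with an affine map and is therefore polynomial. Consequently each jump contribution $\lambda_l(\rho)\,(w(t,h_l(\rho)) - w(t,\rho))$ is polynomial.

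Summing the four polynomial contributions then shows $\mathcal{A}w \in \mathbb{R}[t,\rho,u]$, which completes the argument. The proof is thus a routine closure verification; the conceptual content lies solely in recognizing that the jump term would otherwise be rational, and that Assumption~\ref{asmpt:jump_operators} is exactly the hypothesis that restores polynomiality there.
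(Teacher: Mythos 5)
Your proposal is correct and follows essentially the same route as the paper's proof: a term-by-term closure argument showing each summand of $\mathcal{A}w$ is polynomial (using closure under differentiation, addition, multiplication, and composition), with Assumption~\ref{asmpt:jump_operators} invoked exactly where the paper invokes it, to make $h_l(\rho)$ polynomial so that $w(t,h_l(\rho))$ is a polynomial composition. Your added remark that without this assumption the jump term would be genuinely rational (and that the single factor $\lambda_l(\rho)$ cannot clear the denominator once $\deg w \geq 2$) is a nice piece of motivation the paper omits, but it does not change the argument.
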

\begin{proof}
  Let $w$ be a polynomial. Then, $\frac{\partial w}{\partial t}, \nabla_{\rho} w$, and $\nabla^2_{\rho} w$ are componentwise polynomials as polynomials are closed under differentiation. Further note that $\tilde{\mathcal{L}}(u)\rho$, $\mathcal{G}_l\rho$, $\lambda_l(\rho)$, and by Assumption \ref{asmpt:jump_operators} also $h_l(\rho)$, are componentwise polynomials. Since polynomials are also closed under addition, multiplication, and composition, it thus follows that $\langle \tilde{\mathcal{L}}(u) \rho , \nabla_{\rho} w(t,\rho) \rangle$, $\langle \mathcal{G}_l \rho, \nabla^2_{\rho} w(t,\rho) \, \mathcal{G}_l  \rho \rangle$, and $\lambda_l(\rho) \left(w(t, h_l(\rho)) - w(t,\rho)\right)$ are polynomials and therefore so is $\mathcal{A}w$.
\end{proof}

The resultant sum-of-squares restriction of \eqref{subHJB} reads
\begin{align*}
  J^*_d= \sup_{w_d \in \mathbb{R}_d[t,\rho]} \quad & \int_B w_d(0,\cdot) \, \mathrm{d} \nu_0  \tag{sosHJB$_d$} \label{sosHJB}\\
  \text{s.t.} \qquad \ \, & \mathcal{A}w_d + \ell \in Q_{d+2} \left[\mathcal{T}\cup \mathcal{B} \cup \mathcal{U}\right],\\
  & m - w_d(T,\cdot) \in Q_{d}\left[\mathcal{B}\right],
\end{align*}
where $Q_d[\mathcal{S}]$ denotes the bounded-degree quadratic modulus associated with a set of polynomials $\mathcal{S} = \{a_1,\dots, a_p\}$; formally, 
\begin{multline*}
  Q_d[\mathcal{S}] = \lbrace f \in \mathbb{R}[x] : f = s_0 + \sum_{i=1}^{p} s_i a_i, \\
  \text{ where } s_i \in \Sigma^2[x] \text{ with } \deg{s_i a_i} \leq d \rbrace.
\end{multline*} 
The set of control constraints $\mathcal{U}$ is defined as in Assumption \ref{asmpt:admissible_controls}. The sets $\mathcal{T}$ and $\mathcal{B}$ similarly denote collections of polynomial constraints that generate the sets $[0,T]$ and $B$, respectively. There is not a unique choice for these constraints.  We thus  make the following assumption, which, as will be shown, endows the corresponding sequence of performance bounds $J_d^*$ with favorable convergence guarantees. 
\begin{assumption}\label{asmpt:set_representation}
  For the construction of \eqref{sosHJB} we choose $\mathcal{T} = \lbrace t , T-t\rbrace $ so that $[0,T] = \lbrace t \in \mathbb{R} : p(t) \geq 0, \ \forall p \in \mathcal{T}\rbrace$. Moreover, to keep the computational burden associated with solving \eqref{sosHJB} at a minimum, we explicitly eliminate the symmetry and unit trace constraints in $B$ and represent density matrices only in terms of the remaining degrees of freedom, i.e., $\real{\rho_{ii}}$, for $1 \leq i < n$, and $\rho_{ij}$, for $1 \leq i < j \leq n$. The set of such reduced density matrix representations will be denoted by $D$. In the following, we abuse notation and refer to reduced density matrix representations simply by $\rho \in D$. In these reduced coordinates, the set of pure density matrices $B$ is given by a single polynomial equality constraint 
  \begin{multline}
       \trace{\rho^2} = \left(1- \sum_{i=1}^{n-1} \real{\rho_{ii}}\right)^2 + \sum_{i=1}^{n-1} \real{\rho_{ii}}^2 \\ + 2\sum_{1 \leq i < j \leq n} |\rho_{ij}|^2 = 1. \label{eq:purity}
  \end{multline} 
  Accordingly, we let $\mathcal{B} = \left\lbrace 1-\trace{\rho^2} ,\trace{\rho^2}-1 \right \rbrace$ so that $B = \lbrace \rho \in D: p(\rho) \geq 0, \ \forall p \in \mathcal{B} \rbrace$. 
\end{assumption}
The hierarchical structure of the bounding problems with respect to the degree parameter $d$ as described in the following corollary is desirable from a practical point of view as it allows us to trade off more computation for tighter bounds. 
\begin{corollary}\label{cor:sos_hierarchy}
  Any feasible point $w_d$ of Problem \eqref{sosHJB} underestimates the value function \eqref{eq:val_fxn} on $[0,T]\times B$ and $\int_B w_d(0,\cdot)\,\mathrm{d} \nu_0$  consequently underestimates $J^*$. Moreover, the optimal values $J_d^*$ form a monotonically increasing sequence.
\end{corollary}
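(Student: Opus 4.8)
The plan is to obtain the corollary as an almost immediate consequence of Lemma~\ref{lem:underestimator} together with two elementary bookkeeping facts about the quadratic moduli $Q_d$. Concretely, the whole statement reduces to showing (i) that every feasible point of \eqref{sosHJB} is in fact feasible for \eqref{subHJB}, and (ii) that the feasible set of \eqref{sosHJB} grows with $d$.

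For the underestimation claim I would argue that \eqref{sosHJB} is simply a restriction of \eqref{subHJB}. First note that any $w_d \in \mathbb{R}_d[t,\rho]$ is smooth, hence $w_d \in \mathcal{C}^{1,2}([0,T],B)$, and that by Lemma~\ref{lem:poly2poly} together with Assumption~\ref{asmpt:cost} the expression $\mathcal{A}w_d + \ell$ is a genuine polynomial, so the quadratic-modulus constraints are well posed. Next I would convert each sum-of-squares certificate into a pointwise inequality: writing $\mathcal{A}w_d + \ell = s_0 + \sum_i s_i a_i$ with $s_i \in \Sigma^2$ and the $a_i$ ranging over $\mathcal{T}\cup\mathcal{B}\cup\mathcal{U}$, every generator $a_i$ is non-negative on $[0,T]\times B \times U$ (this is precisely how $\mathcal{T}$, $\mathcal{B}$, and $\mathcal{U}$ encode those sets, with the two members of $\mathcal{B}$ vanishing on $B$) and every $s_i$ is non-negative everywhere, so $\mathcal{A}w_d + \ell \geq 0$ on $[0,T]\times B \times U$. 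The identical argument applied to $m - w_d(T,\cdot) \in Q_d[\mathcal{B}]$ yields $w_d(T,\cdot) \leq m$ on $B$. Thus $w_d$ satisfies both constraints of \eqref{subHJB}, and Lemma~\ref{lem:underestimator} immediately gives that $w_d$ underestimates the value function on $[0,T]\times B$ and that $\int_B w_d(0,\cdot)\,d\nu_0 \leq J^*$.

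For monotonicity I would show that the level-$d$ feasible set, viewed inside $\mathbb{R}_{d+1}[t,\rho]$, is contained in the level-$(d+1)$ feasible set. The inclusion $\mathbb{R}_d[t,\rho] \subset \mathbb{R}_{d+1}[t,\rho]$ is clear. The one point deserving care is the nesting of the quadratic moduli: I would verify that $Q_k[\mathcal{S}] \subseteq Q_{k+1}[\mathcal{S}]$ for every $k$ and every $\mathcal{S}$, simply because a representation $f = s_0 + \sum_i s_i a_i$ with $\deg(s_i a_i) \leq k$ automatically satisfies $\deg(s_i a_i) \leq k+1$, so the same certificate is admissible at the higher degree. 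Applying this with $\mathcal{T}\cup\mathcal{B}\cup\mathcal{U}$ (at degrees $d+2 \leq d+3$) and with $\mathcal{B}$ (at degrees $d \leq d+1$) shows that any $w_d$ feasible at level $d$ remains feasible at level $d+1$ with an unchanged objective. Since the supremum is taken over a larger set at level $d+1$, it follows that $\underline{J}_d^* \leq \underline{J}_{d+1}^*$.

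As for the main obstacle, there is essentially none of substance here: the corollary is a consequence of Lemma~\ref{lem:underestimator} and the definition of $Q_d$. The only places requiring mild attention are checking that membership in the quadratic modulus transfers correctly to pointwise non-negativity on the intended semialgebraic set — in particular that the equality $\trace{\rho^2}=1$ defining $B$, encoded as the two-sided inequality pair $\mathcal{B}$, is handled consistently so that the associated multiplier terms vanish on $B$ — and stating the nesting $Q_k \subseteq Q_{k+1}$ cleanly enough that the degree accounting is transparent.
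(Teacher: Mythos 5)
Your proposal is correct and follows essentially the same route as the paper's own proof: feasibility for \eqref{sosHJB} implies feasibility for \eqref{subHJB} (hence underestimation via Lemma~\ref{lem:underestimator}), and the nesting $Q_d[\mathcal{S}] \subset Q_{d+1}[\mathcal{S}]$ gives monotonicity of $\underline{J}_d^*$. You merely spell out the intermediate steps (smoothness of polynomials, pointwise non-negativity from the sum-of-squares certificates, the vanishing of the generators of $\mathcal{B}$ on $B$) that the paper leaves implicit.
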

\begin{proof}
  Any feasible point of \eqref{sosHJB} is also feasible for \eqref{subHJB} and thus underestimates the value function by Lemma \ref{lem:underestimator}. Since $Q_{d}[\mathcal{S}] \subset Q_{d+1}[\mathcal{S}]$ for any set of polynomials $\mathcal{S}$, it follows further that (sosHJB$_{d+1}$) is a relaxation of \eqref{sosHJB} and hence $J^*_{d+1} \geq J^*_{d}$.
\end{proof}

A natural question that arises from Corollary \ref{cor:sos_hierarchy} is if the bound $J_d^*$ approaches the true optimal value $J^*$ as the degree $d$ is increased. In the following, we make a first step toward analyzing this convergence question. Specifically, we prove convergence whenever \eqref{QSOCP} admits a smooth value function and the control constraints satisfy the following mild regularity condition.
\begin{definition}[Putinar's Condition~\cite{putinar1993positive}]
  We say a set of polynomials $\mathcal{S} \subset \mathbb{R}[x]$ satisfies Putinar's condition if $\exists N > 0$ such that $N - \sum_{i=1}^n x_i^2 \in Q_{d}[\mathcal{S}]$ for some positive integer $d$.
\end{definition}
To that end, we first observe that the polynomials that frame Problem \eqref{sosHJB} naturally satisfy Putinar's condition as long as the control constraints do.
\begin{lemma}\label{lem:putinar_condition}
  The set $\mathcal{B}$ as defined in Assumption \ref{asmpt:set_representation} satisfies Putinar's condition. If further the set of control constraints $\mathcal{U}$ satisfies Putinar's condition, then so does the set $\mathcal{T}\cup \mathcal{B} \cup \mathcal{U}$.
\end{lemma}
\begin{proof}
   From the description in Assumption \ref{asmpt:set_representation}, it is easily verified that $\mathcal{B}$ satisfies Putinar's condition since Eq. \eqref{eq:purity} yields for $\rho \in D$ that
   \begin{multline*}
       1 - \sum_{i=1}^{n-1} \real{\rho_{ii}}^2 - \sum_{1\leq i < j \leq n} |\rho_{ij}|^2 =\\
       \left(1- \sum_{i=1}^{n-1} \real{\rho_{ii}}\right)^2 + \sum_{1\leq j < j \leq n} |\rho_{ij}|^2  + 1-\trace{\rho^2}.
   \end{multline*}
   The right-hand side of the relation above is clearly an element of $Q_{2}[\mathcal{B}]$ as the first two terms are sums of squares. Further, any set of degree-one polynomials defining a bounded polyhedron satisfies Putinar's condition~\cite{lasserre2010moments}, thus $\mathcal{T}$ does as well. Finally note that 
   $a \in Q_d[\mathcal{T}]$, $b \in Q_d[\mathcal{B}]$, $c \in Q_d[\mathcal{U}]$ implies that $a + b + c \in Q_d[\mathcal{T}\cup\mathcal{B}\cup \mathcal{U}]$ as $\mathcal{T}$, $\mathcal{B}$, and $\mathcal{U}$ are comprised of polynomials in distinct variables. The conclusion follows. 
\end{proof}
With this in hand, the convergence of the bounds furnished by \eqref{sosHJB} can be established by application of Putinar's Positivstellensatz \cite{putinar1993positive} according to the following theorem.
\begin{theorem}\label{thm:convergence}
  If the value function \eqref{eq:val_fxn} is $\mathcal{C}^{1,2}([0,T]\times B)$ and the set of control constraints $\mathcal{U}$ satisfies Putinar's condition, then $J^*_d \uparrow J^*$.
\end{theorem}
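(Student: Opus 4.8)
The plan is to combine the one-sided bound already established with a constructive polynomial-approximation argument. By Corollary~\ref{cor:sos_hierarchy} the sequence $\underline{J}^*_d$ is monotonically increasing and bounded above by $J^*$, hence convergent to some $\underline{J}^* \le J^*$; it therefore suffices to prove $\underline{J}^* \ge J^*$, i.e.\ that for every $\epsilon > 0$ there exist a degree $d$ and a polynomial $w_d$ feasible for \eqref{sosHJB} with $\int_B w_d(0,\cdot)\,d\nu_0 \ge J^* - \epsilon$. The natural anchor is the value function itself: since $V$ solves the HJB equation we have $\mathcal{A}V(\cdot,\cdot,u) + \ell(\cdot,u) \ge \inf_{u'\in U}(\mathcal{A}V + \ell) = 0$ for every $u \in U$, and $V(T,\cdot) = m$, so $V$ is feasible for \eqref{subHJB}; moreover the dynamic programming interpretation gives $\int_B V(0,\cdot)\,d\nu_0 = J^*$. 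The difficulty is that $V$ need not be a polynomial, so I must approximate it by a feasible polynomial while sacrificing only $O(\epsilon)$ in the objective.

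The first and technically most delicate step is a simultaneous polynomial approximation. Because the hypothesis places $V$ in $\mathcal{C}^{1,2}([0,T],B)$ -- understood in Whitney's sense on the closed set $B$ -- $V$ extends to a genuinely $\mathcal{C}^{1,2}$ function on a neighborhood of the compact domain, and I would invoke a Stone--Weierstrass/mollification argument to produce, for any $\delta > 0$, a polynomial $p \in \mathbb{R}[t,\rho]$ whose value together with $\partial_t p$, $\nabla_\rho p$, and $\nabla^2_\rho p$ approximate the corresponding quantities of $V$ uniformly within $\delta$ on $[0,T]\times B$. Inspecting the generator \eqref{eq:inf_gen} and using that $\tilde{\mathcal{L}}(u)\rho$, $\mathcal{G}_l\rho$, and $\lambda_l(\rho)$ are polynomials bounded on the compact set $[0,T]\times B\times U$, together with the fact that $h_l$ maps $B$ into $B$ (cf.\ Lemma~\ref{lem:purity_invariance}, so the jump term is controlled by the $\mathcal{C}^0$-error alone), this yields a uniform bound $|\mathcal{A}p - \mathcal{A}V| \le C\delta$ on $[0,T]\times B\times U$ for a constant $C$ independent of $\delta$. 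Consequently $\mathcal{A}p + \ell \ge -C\delta$ on $[0,T]\times B\times U$ and $m - p(T,\cdot) \ge -\delta$ on $B$.

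Next I would restore strict positivity by an explicit perturbation that is transparent under the generator. Since a function of $t$ alone satisfies $\mathcal{A}c(t) = c'(t)$ (its $\rho$-derivatives and jump differences vanish), one has $\mathcal{A}(t-T) = 1$ and $\mathcal{A}(\mathrm{const}) = 0$. Setting
\begin{align*}
    w = p + 2C\delta\,(t-T) - 2\delta
\end{align*}
therefore gives $\mathcal{A}w + \ell = \mathcal{A}p + \ell + 2C\delta \ge C\delta > 0$ on $[0,T]\times B\times U$, while at $t=T$ the linear term vanishes and $m - w(T,\cdot) = m - p(T,\cdot) + 2\delta \ge \delta > 0$ on $B$. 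Both constraint polynomials are now strictly positive on the relevant compact semialgebraic sets. Invoking Lemma~\ref{lem:putinar_condition}, the sets $\mathcal{T}\cup\mathcal{B}\cup\mathcal{U}$ and $\mathcal{B}$ satisfy Putinar's condition, so Putinar's Positivstellensatz certifies $\mathcal{A}w + \ell$ and $m - w(T,\cdot)$ as members of the respective quadratic moduli with certificates of some finite degree. Hence $w$ is feasible for \eqref{sosHJB} for all $d$ large enough, and its objective is at least $\int_B V(0,\cdot)\,d\nu_0 - (3+2CT)\delta = J^* - (3+2CT)\delta$, where the loss collects the linear and constant shifts and the $\mathcal{C}^0$-error of $p$ against $V$ at $t=0$ integrated against the probability measure $\nu_0$. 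Choosing $\delta$ small enough to push this gap below $\epsilon$ and using monotonicity gives $\underline{J}^*_d \ge J^* - \epsilon$ for suitable $d$, and letting $\epsilon \to 0$ proves $\underline{J}^*_d \nearrow J^*$.

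The step I expect to be the main obstacle is the simultaneous $\mathcal{C}^{1,2}$ polynomial approximation on $B$: $B$ is a lower-dimensional, curved real variety rather than a full-dimensional domain, so the approximation must respect the geometry encoded by $\mathcal{B}$, and it is precisely here that the Whitney interpretation of differentiability on the closed set $B$ and the smoothness hypothesis on the value function are indispensable -- they guarantee a $\mathcal{C}^{1,2}$ extension to an ambient neighborhood on which ordinary polynomial approximation with control of first and second derivatives is available. Everything downstream -- the sign-fixing perturbation, the application of the Positivstellensatz via Lemma~\ref{lem:putinar_condition}, and the degree bookkeeping -- is then routine.
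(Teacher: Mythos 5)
Your proposal is correct and takes essentially the same route as the paper's own proof: approximate $V$ and its derivatives by a polynomial (via the Whitney-sense smoothness on the closed set $B$), add a linear-in-$t$ shift to make both constraints strictly positive (your $2C\delta(t-T)-2\delta$ versus the paper's $2\epsilon(t-T-1)$), invoke Putinar's Positivstellensatz through Lemma~\ref{lem:putinar_condition}, and close with the same objective bookkeeping. The only difference is one of detail, not substance: you derive the generator-level error bound $|\mathcal{A}p - \mathcal{A}V| \le C\delta$ explicitly from the derivative approximation (including the observation that $h_l$ maps $B$ into $B$, so the jump term needs only the $\mathcal{C}^0$ error), whereas the paper asserts the bound $\|\mathcal{A}V - \mathcal{A}w\|_\infty < \epsilon$ directly.
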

\begin{proof}
  Let $\epsilon > 0$ and recall that on a compact set any continuously differentiable function and its (partial) derivatives can be approximated uniformly by a polynomial and its derivatives~\cite{whitney1992analytic}. Therefore, there exists a polynomial $w$ such that 
  \begin{align*}
      \|V - w \|_{\infty}, \| \mathcal{A}V - \mathcal{A}w\|_{\infty} < \epsilon,
  \end{align*} 
  where $\|\cdot \|_{\infty}$ refers to the sup norm on the associated domains $[0,T]\times B$ and $[0,T]\times B \times U$, respectively. 
  Under the assumed smoothness of the value function $V$, it is well-known that $V$ satisfies the HJB equation (see e.g. \cite[Thm. 3.1]{oksendal2007applied}) and thus it holds in particular that
  \begin{align*}
      &\mathcal{A}V + \ell \geq 0 \text{ on } [0,T] \times B\times U,\\
      &m - V(T,\cdot) \geq 0 \text{ on } B.
  \end{align*}
  Now consider $\hat{w} = w + 2\epsilon (t-T-1)$ and note that, by construction, $\mathcal{A}\hat{w} = \mathcal{A}w+2\epsilon$ and $\hat{w}(T,\cdot) = w(T, \cdot)-2\epsilon$. It follows that
  \begin{align*}
      &\mathcal{A}\hat{w} + \ell \geq \mathcal{A}V + \ell + \epsilon > 0 \text{ on } [0,T] \times B \times U, \\
      &m - \hat{w}(T, \cdot) \geq m - V(T,\cdot) + \epsilon > 0 \text{ on } B.
  \end{align*}
  By Lemma \ref{lem:putinar_condition}, Putinar's Positivstellensatz~\cite[Lemma 4.1]{putinar1993positive} therefore guarantees for sufficiently large $d$ that $\mathcal{A}\hat{w} + \ell \in Q_{d+2}[\mathcal{T}\cup\mathcal{B}\cup\mathcal{U}]$ and likewise $m - \hat{w}(T,\cdot) \in Q_{d}[\mathcal{B}]$ such that $\hat{w}$ is feasible for \eqref{sosHJB}. 
  The result follows by noting that
  \begin{align*}
      J^* - J_d^* &\leq \int_B \left\lvert V(0,\cdot)-\hat{w}(0,\cdot) \right\rvert \, \mathrm{d} \nu_0\\
      &\leq \max_{\rho \in B} |V(0,\rho) - w(0,\rho)| + | 2\epsilon(T + 1) |\\
      &<  (2T+3)\epsilon.
  \end{align*}
\end{proof}
\begin{remark}
  It should be emphasized that the assumption that \eqref{QSOCP} admits a smooth value function is by no means weak and, even if satisfied, generally not easily verified. Theorem \ref{thm:convergence} is only a first step toward establishing a formal basis for our empirical observation that the bounds furnished by \eqref{sosHJB} often appear to converge to $J^*$ in practice. Related work~\cite{fleming1989convex,bhatt1996occupation,lasserre2008nonlinear} suggests that the conditions under which convergence can be guaranteed may be substantially relaxed. 
\end{remark}

Finally we wish to emphasize that \eqref{sosHJB} is equivalent to a semidefinite program (SDP)~\cite{parrilo2000structured,lasserre2001global} which is readily and automatically constructed in optimization modeling tools~\cite{dunning2017jump, weisser2019polynomial, legat2022mathoptinterface} available in modern programming languages like Julia~\cite{bezanson2017julia}. The resultant SDPs may then be solved with a range of powerful off-the-shelf available solvers~\cite{andersen2000mosek, toh1999sdpt3, sturm1999using, odonoghue2016conic, garstka2021cosmo, coey2022solving}.

\section{Extensions}\label{sec:extensions}
\subsection{Infinite horizon problems}
While we detailed our analysis for the finite horizon problem \eqref{QSOCP}, one can construct analogous bounding problems for (discounted) infinite horizon problems. To that end, suppose our control objective is of the form
\begin{align*}
  \mathbb{E} \left[ \int_0^\infty e^{-\gamma t} \ell(\rho_t, u_t) \, \mathrm{d}t \right]
\end{align*}
with discount rate $\gamma > 0$. Then, one may notice from Eq. \eqref{eq:inf_gen} that 
\begin{align*}
  \mathcal{A}(e^{-\gamma t} w) = e^{-\gamma t} \left(  \mathcal{A}w - \gamma w \right).
\end{align*}
It follows by analogous arguments as in the proof of Lemma \ref{lem:underestimator} that for any $w\in \mathcal{C}^{1,2}([0,\infty)\times B)$ that satisfies
\begin{align*}
  &\mathcal{A}w - \gamma w + \ell \geq 0 \text{ on } [0,\infty) \times B \times U,
\end{align*}
$e^{-\gamma t}w$ is a global underestimator of the value function associated with the infinite horizon problem.
A simple modification of the bounding problems \eqref{sosHJB} thus enables the tractable computation of monotonically improving value function underestimators and hence performance bounds in the case of an infinite control horizon. 

\subsection{Mixed initial states}\label{sec:mixed_states}
The relaxation of Assumption \ref{asmpt:initial_purity} to mixed initial quantum states is possible at the expense of introducing additional conservatism. For initially mixed quantum states, \eqref{subHJB} characterizes valid bounds when the constraints are enforced on the set of all mixed quantum states
\begin{align*}
  \bar{B} = \left\lbrace \rho \in \mathbb{C}^{n\times n} : \rho = \rho^*, \trace{\rho} =1, \psi^* \rho \psi \geq 0 \ \forall \psi \in \mathbb{C}^n\right\rbrace.
\end{align*}
As $\bar{B}$ can be represented by finitely many polynomial inequality constraints (by Sylvester's criterion), the resultant problem in principle admits valid sum-of-squares restrictions analogous to \eqref{sosHJB}. However, these restrictions are impractical. They rely on the reformulation of the positivity requirement of a mixed quantum state in terms of the non-negativity of all its principal minors. The large number and high degree of these polynomial inequalities render the corresponding sum-of-squares restrictions expensive to solve. A more practical approach is to impose the constraints in \eqref{subHJB} instead on a simpler closed basic semialgebraic overapproximation of $\bar{B}$; for example,
\begin{align*}
  \tilde{B} =\left\lbrace \rho \in \mathbb{C}^{n\times n} : \rho^* = \rho, \trace{\rho} = 1, \trace{\rho^2} \leq 1\right\rbrace.
\end{align*}
This modification potentially introduces additional conservatism as $\tilde{B}$ is a strict superset of $\bar{B}$ but leads to more practical sum-of-squares restrictions. Note that the restriction $\tilde{B}$ can be refined flexibly by imposing additional constraints of the form $\psi^* \rho \psi \geq 0$ for any number of fixed state vectors $\psi \in \mathbb{C}^n$.



\subsection{Imperfect measurements}
So far we have implicitly assumed {\em lossless} or {\em perfect} homodyne and photon counting measurements. In practice, however, various factors can lead to {\em imperfect} detection \cite{wiseman2009quantum}. While in most such cases the bounds furnished by \eqref{sosHJB} will remain valid due to the simple fact that additional losses typically lead to more stringent performance limitations, it is often of interest to quantify explicitly the limitations induced by measurement imperfections. The presented bounding method extends naturally to this task. To that end, it is necessary to account for measurement inefficiencies in \eqref{eq:QFE}. For homodyne measurements with efficiency $\eta \in [0,1]$, the innovation operator in \eqref{eq:QFE} acts according to  
\begin{align*}
  \mathcal{G}_l \rho = \eta (\sigma_l \rho_t + \rho \sigma_l^* - \trace{\sigma_l \rho_t + \rho \sigma_l^*} \rho ), 
\end{align*}
and for inefficient photon detection, the drift operator $\mathcal{L}_l$ must be modified to
\begin{align*}
  \mathcal{L}_l \rho = \eta \left(\sigma_l \rho \sigma_l^* - \trace{\sigma_l \rho \sigma_l^*} \rho_t \right)
\end{align*}
and the rate of the driving Poisson counter decays to $\lambda_l(\rho) = \eta \trace{\sigma_l \rho \sigma_l^*}$ \cite[Section 4.8]{wiseman2009quantum}. It is easily observed that under these modifications the conclusion of Lemma \ref{lem:poly2poly} remains valid. Given imperfect measurements ($\eta < 1$), however, the conclusion of Lemma \ref{lem:purity_invariance} no longer holds and pure quantum states no longer remain pure under the resultant dynamics. As a consequence, the set of pure states $B$ in \eqref{subHJB} must be replaced by a basic semialgebraic overapproximation of the set of mixed states as discussed in Section \ref{sec:mixed_states}. 



\subsection{Extraction of heuristic controllers} \label{sec:controllers}
Bounds computed via \eqref{sosHJB} may be used to certify the near-optimality of any given control policy. As such, the proposed bounding method complements heuristic approaches for the design of control policies. The solution of \eqref{sosHJB}, however, can also be used to inform controller design directly. At the optimal point of \eqref{sosHJB}, the optimization variable $w_d$ approximates by construction the best possible polynomial underapproximator of the value function. Thus, it is reasonable to use $w_d$ as a proxy for the value function~\cite{henrion2008nonlinear,savorgnan2009discrete} and construct a heuristic controller by greedily descending on $w_d$, i.e., 
\begin{align}
  u^*_t(\rho) \in \argmin_{u\in U} \mathcal{A}w_d(t,\rho,u) + \ell(\rho,u). \label{eq:DPHeuristic}
\end{align}
The above requires minimization of a polynomial over $U$, which is only expected to be tractable in the case of one or few control inputs. Otherwise, we argue that the inherently heuristic nature of this construction may justify the use of fast heuristics to find local or approximate minimizers instead, for example by relying on recent advances in machine learning~\cite{deits2019lvis, bertsimas2021voice}. 

\section{Example}\label{sec:example}
We finally demonstrate the utility of the proposed bounding method for the problem of stabilizing the state of a qubit in a cavity~\cite{schaefer2021control}. Figure \ref{fig:sys} illustrates the system under consideration.
\begin{figure}[h!]
  \centering
  \includegraphics[scale = 0.5]{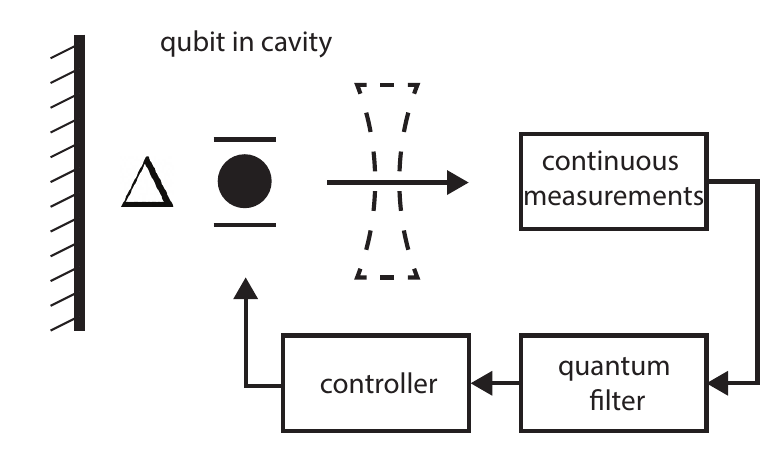}
  \caption{Control loop: qubit in a cavity subjected to continuous measurements.}
  \label{fig:sys}
\end{figure}
The Hamiltonian of the qubit is given by 
\begin{align*}
  H(u) = \frac{\Delta}{2} \sigma_z + \frac{\Omega}{2} u \sigma_x,
\end{align*} where $\sigma_x$ and $\sigma_z$ denote the Pauli matrices:
\begin{align*} 
  \sigma_x = \begin{bmatrix}
                  0 & 1 \\
                  1 & 0
              \end{bmatrix} \text{ and } \sigma_z = \begin{bmatrix} 
                  1 & 0 \\
                  0 & -1
              \end{bmatrix}.
\end{align*}    
To enable feedback, we assume that the qubit is subjected to continuous measurements associated with the jump operator
\begin{align*} 
  \sigma = \kappa \begin{bmatrix}
      0 & 0 \\
      1 & 0
  \end{bmatrix}.
\end{align*}
Note that such a measurement conforms with Assumption \ref{asmpt:jump_operators}. The parameters are chosen as $\Delta = \Omega = 5$ and $\kappa = 1$; the set of admissible control actions is $U=[-1,1]$. In the following, we consider a realization of the measurements through homodyne detection and photon counting setups and contrast the two. The objective of the control problem is to stabilize the excited state $\psi^{\phantom{*}}_{\rm ref} = \begin{bmatrix} 1 & 0 \end{bmatrix}^*$ with minimal expected infidelity
\begin{align*}
  \mathbb{E} \left[ \int_0^T 1 - \psi_{\rm ref}^* \rho_t \psi^{\phantom{*}}_{\rm ref} \, \mathrm{d}t \right]
\end{align*} (viz. maximum expected fidelity). The qubit is assumed to reside initially in its ground state $\psi_0 = \begin{bmatrix}
  0 & 1 
\end{bmatrix}^*$ and the distribution of the initial state is hence given by $\nu_0 = \delta_{\psi^{\phantom{*}}_0 \psi_0^*}$.

For the implementation of the presented bounding method, we relied on the optimization ecosystem in Julia. Specifically, we used \texttt{MarkovBounds.jl}~\cite{holtorf2024stochastic} and \texttt{SumOfSquares.jl}~\cite{weisser2019polynomial} to assemble the bounding problems and pass the resultant SDPs via the \texttt{MathOptInterface}~\cite{legat2022mathoptinterface} to Mosek v10~\cite{andersen2000mosek}. All computations were performed on a MacBook M1 Pro with 16GB unified memory. 

Table \ref{tab:performance_bounds} summarizes upper bounds for the maximal average fidelity attainable with both measurement setups as obtained by solving the bounding problems \eqref{sosHJB} for increasing degree $d$. %
\begin{table}[h]
  \centering
  \renewcommand{\arraystretch}{1.2}

  \caption{Performance bounds for feedback controlled qubit}
  \label{tab:performance_bounds}
  \textbf{Homodyne detection}
  \vspace{0.5em}
  
  \begin{tabular}{|c|c|c|}
      \hline
      Degree $d$ & Fidelity bound & Computational time [$\si{\second}$] \\ \hline 
      2 & 0.8502 & 0.008 \\ 
      4 & 0.8111 & 0.078 \\ 
      6 & 0.7973 & 0.64 \\ 
      8 & 0.7893 & 5.0 \\ 
      10 & 0.7856 & 27.9 \\
      \hline
      \hline
      \multicolumn{2}{|l}{\textbf{Best known fidelity:} 0.7750} & \\
      \hline
  \end{tabular}
  \vspace{1.5em}
  
  \textbf{Photon counting}
  \vspace{0.5em}

  \begin{tabular}{|c|c|c|}
      \hline
      Degree $d$ & Fidelity bound & Computational time [$\si{\second}$] \\ \hline 
      2 & 0.9602 & 0.0043 \\ 
      4 & 0.7497 & 0.031 \\ 
      6 & 0.7153 & 0.180 \\ 
      8 & 0.6902 & 1.67 \\ 
      10 & 0.6798 & 14.9 \\
      \hline
      \hline
      \multicolumn{2}{|l}{\textbf{Best known fidelity:} 0.6547} & \\
      \hline
  \end{tabular}
\end{table}
\begin{figure*}[h!]
  \centering
  \includegraphics[width=0.6\linewidth]{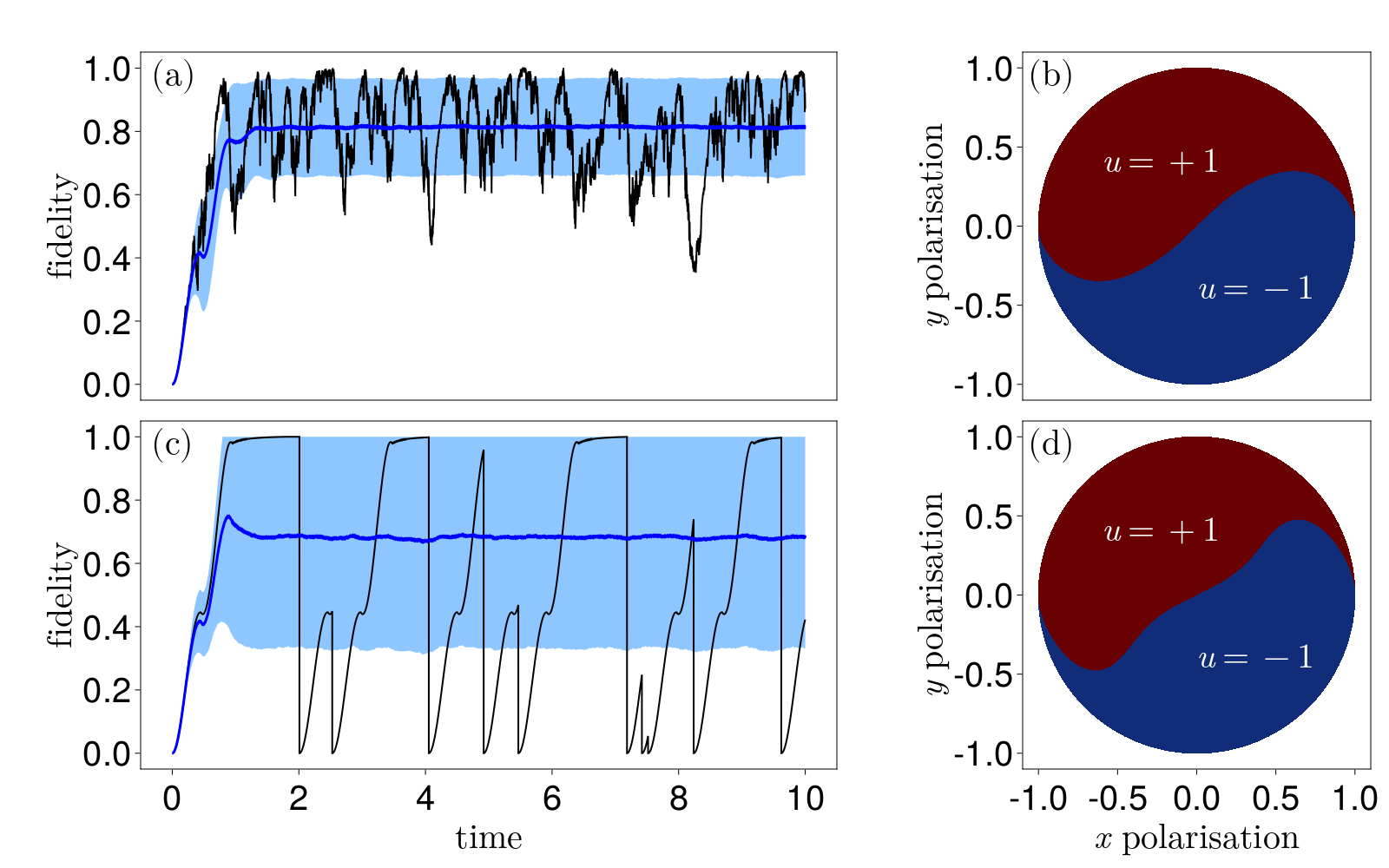}
\caption{Fidelity of a closed-loop controlled qubit alongside a visualization of the heuristic controller at $t=5$ in the x-y plane of the Bloch sphere for homodyne (a,b) and photon counting measurements (c,d). Mean trace and standard deviation band are shown in blue. A representative sample path is shown in black. \label{fig:res}}
\end{figure*}%
The bounds are clearly non-trivial and suggest to be informative even for moderate degrees. To emphasize this point, we further constructed heuristic controllers for both measurement setups from the optimal solution of $(\text{sosHJB}_4)$ as described in Section \ref{sec:controllers}. Their empirical performance serves as an achievable lower bound for the best attainable mean fidelity. Figure~\ref{fig:res} shows the mean fidelity and noise level attained by both controllers, alongside a visualization of the associated control policy as a function of the polarisations of the quantum state. The controllers achieve mean fidelities of $\SI{77.50}{\percent}$ and $\SI{65.47}{\percent}$ (ensemble averages over 10,000 sample trajectories) for the homodyne detection and photon counting setup, respectively. Against the backdrop of the computed bounds, the controllers are thus certifiably near-optimal, showcasing the practical utility of the proposed bounding method. An interesting spillover of this example is that, barring (highly unlikely) major statistical errors in the estimates of the fidelity attained by the heuristic controllers, this case study constitutes a computational proof that under the assumed circumstances a homodyne detection setup allows for strictly and significantly greater average mean fidelity than photon counting. This demonstrates that the proposed bounding method may provide relevant insights for the design of quantum devices at an early stage.

\section{Conclusion}\label{sec:conclusion}
Using quantum filtering theory and moment-sum-of-squares techniques, we have devised a hierarchy of convex optimization problems that furnishes a sequence of monotonically improving, practically computable bounds for the best attainable feedback control performance for a broad class of quantum systems subjected to continuous measurements. The bound sequence is proved to converge to the true optimal control performance under technical conditions. As demonstrated for a qubit in a cavity, we argue that the proposed bounding method can have relevant implications for the design of controlled quantum devices. On the one hand, it provides access to heuristic controllers alongside performance bounds which can guide controller design or certify the optimality of a given control policy. On the other hand, the bounds may serve as witnesses of fundamental limitations and thus inform the design of quantum systems at an early stage.

\section*{Acknowledgment}
We thank Gaurav Arya and Aparna Gupte for helpful discussions. The views and opinions of authors expressed herein 
do not necessarily state or reflect those of the United StatesGovernment
or any agency thereof. The views and conclusions contained in this 
document are those of the authors and should not be interpreted as 
representing the official policies, either expressed or implied, of the 
United States Air Force or the U.S. Government. The U.S. Government
is authorized to reproduce and distribute reprints for Government purposes notwithstanding any copyright notation herein.

\AtNextBibliography{\scriptsize}
\printbibliography

\end{document}